\newcommand{\ket}[1]{\mbox{$ | #1 \rangle $}}
\newcommand{\bra}[1]{\mbox{$ \langle #1 | $}}
\newcommand{\tr}{\mathrm{Tr}}
\newcommand{\cP}{\mathcal{P}}
\newcommand{\cM}{\mathcal{M}}
\newcommand{\cC}{\mathcal{C}}
\newcommand{\cZ}{\mathcal{Z}}
\newcommand{\cB}{\mathcal{B}}
\newcommand{\cH}{\mathcal{H}}
\newcommand{\cE}{\mathcal{E}}
\newcommand{\cK}{\mathcal{K}}
\newcommand{\cI}{\mathcal{I}}
\newcommand{\cX}{\mathcal{X}}
\newcommand{\cS}{\mathcal{S}}
\newcommand{\I}{\mathrm{i}}
\newtheoremstyle{note}
  {\topsep/2}              	
  {\topsep/2}            	
  {}                        
  {\parindent}             	
  {\itshape}                
  {.---}                    
  {0pt}                     
  {\thmname{#1}\thmnumber{ \itshape#2}\thmnote{ (#3)}} 
\newtheorem{theorem}{Theorem}
\newtheorem{proposition}[theorem]{Proposition}
\theoremstyle{definition}
\theoremstyle{remark}
\begin{document}
\title{Efficient verification of quantum processes}

\author{Ye-Chao Liu}
\affiliation{Key Laboratory of Advanced Optoelectronic Quantum Architecture and Measurement of
Ministry of Education, School of Physics, Beijing Institute of Technology, Beijing 100081, China}

\author{Jiangwei Shang}
\email{jiangwei.shang@bit.edu.cn}
\affiliation{Key Laboratory of Advanced Optoelectronic Quantum Architecture and Measurement of
Ministry of Education, School of Physics, Beijing Institute of Technology, Beijing 100081, China}

\author{Xiao-Dong Yu}
\email{Xiao-Dong.Yu@uni-siegen.de}
\affiliation{Naturwissenschaftlich-Technische Fakult\"at, Universit\"at Siegen,
Walter-Flex-Str. 3, D-57068 Siegen, Germany}

\author{Xiangdong Zhang}
\email{zhangxd@bit.edu.cn}
\affiliation{Key Laboratory of Advanced Optoelectronic Quantum Architecture and Measurement of
Ministry of Education, School of Physics, Beijing Institute of Technology, Beijing 100081, China}

\date{\today}
%

\begin{abstract}
Quantum processes, such as quantum circuits, quantum memories, and quantum
channels, are essential ingredients in almost all quantum information
processing tasks. However, the characterization of these processes remains a daunting task due to the exponentially increasing amount of resources required
by traditional methods. Here, by first proposing the concept of
quantum process verification, we establish two efficient and practical protocols
for verifying quantum processes which can provide an exponential improvement
over the standard quantum process tomography and a quadratic improvement
over the method of direct fidelity estimation.
The efficacy of our protocols is illustrated with the verification of various
quantum gates as well as the processes of well-known quantum circuits.
Moreover, our protocols are readily applicable with current experimental techniques
since only local measurements are required.
In addition, we show that our protocols for verifying quantum
processes can be easily adapted to verify quantum measurements.
\end{abstract}

\maketitle
%

\section{Introduction}%
Quantum processes, such as quantum circuits, quantum memories, and quantum channels,
are a broad class of transformations that a quantum mechanical system can undergo.
Naturally, the characterization and identification of these quantum processes become
an indispensable task in many fields of quantum information processing 
\cite{Eisert.etal2019}, including quantum communication \cite{Wilde2013}, 
quantum computation \cite{QCQI2010}, quantum
metrology \cite{Giovannetti.etal2011,Toth.Appelaniz2014}, and more.
Take quantum gates, or more generally quantum circuits, as an example.
Efficient and reliable characterization of quantum circuits
plays a vital role to guarantee the correctness of the computation results of
quantum devices.
Recently, several commercial institutions have claimed great progress in the
development of quantum computers with tens of qubits being built in practice.
However, the standard quantum process tomography (QPT) method
\cite{Chuang.Nielsen1997,Poyatos.etal1997} which requires measurements of the
order of $O(4^n)$ for an $n$-qubit system has already become powerless for
these intermediate-size quantum devices.

Hence, a lot of effort has been devoted to searching for alternative nontomographic methods.
Along this research line there are, for instance, compressed sensing QPT
\cite{Gross.etal2010,Flammia.etal2012b,Kliesch.etal2019},
direct fidelity estimation (DFE) \cite{Flammia.Liu2011, daSilva.etal2011}, and
randomized benchmarking (RB)
\cite{Emerson.etal2005,Levi.etal2007,Knill.etal2008,%
Dankert.etal2009,Magesan.etal2011}, as well as accreditation protocols
\cite{Samuele.etal2019}.
However, all these approaches either still require a large amount of
resources, or are not generally applicable in many scenarios of practical interest.

Similar problems also exist for the characterization of quantum states,
where many efficient methods are proposed to overcome the resource-inefficiency
problem with tomography.
Among them, quantum state verification (QSV)~\cite{Pallister.etal2018} has drawn
much interest due to its various nice properties including a high efficiency and
the requirement for local measurements only.
In short, QSV is a procedure for gaining confidence that the output of a quantum
device is a particular state by employing local measurements.
Many kinds of bipartite or multipartite quantum states
\cite{Morimae.etal2017,Pallister.etal2018,Takeuchi.Morimae2018,%
Hayashi.etal2006,Zhu.Hayashi2019b,Dimic.Dakic2018,Yu.etal2019,%
Li.etal2019,Wang.Hayashi2019,Zhu.Hayashi2019a,Liu.etal2019b,Zhang.etal2019,%
Zhu.Hayashi2019c,Zhu.Hayashi2019d,Li.etal2019b,Saggio.etal2019}
can be verified efficiently or even optimally by QSV.
In general, a QSV protocol $\Omega$ for verifying the target state \ket{\phi}
has the form
\begin{equation}
  \Omega = \sum_ip_i\Omega_i\,,
\end{equation}
where $\{\Omega_i,\openone-\Omega_i\}$ are random ``pass-or-fail'' tests,
which are implementable with local measurements and satisfy that $\tr(\Omega_i
\ket{\phi}\bra{\phi})=1$ for all $i$.
In the case where all $N$ states pass the test, we achieve the
confidence level $1-\delta$ with
\begin{equation}
  \delta\le[1-\epsilon \nu(\Omega)]^N\,,
  \label{eq:failureProbability}
\end{equation}
where $\nu(\Omega):=1-\lambda_2(\Omega)$ denotes the spectral gap
between the largest and the second largest eigenvalues of $\Omega$
\cite{Pallister.etal2018, Zhu.Hayashi2019c}.
Hence, the QSV protocol $\Omega$ can verify the target state to fidelity
$1-\epsilon$ and confidence level $1-\delta$ with the number of copies of the
quantum states satisfying
\begin{equation}\label{eq:QSVparameter}
  N\geq\frac{\ln\delta^{-1}}{\ln\bigl\{[1-\nu(\Omega)\epsilon]^{-1}\bigr\}}\approx
  \frac1{\nu(\Omega)}\epsilon^{-1}\ln\delta^{-1}\,.
\end{equation}

In this work, we propose the concept of \emph{quantum process verification} (QPV).
Specifically, with a spirit similar to that for QSV, two efficient and practical
protocols are established for QPV.
Thanks to the Choi-Jamio{\l}kowski isomorphism \cite{Jamiolkowski1972, Choi1975},
we are able to relate QPV to QSV, and derive an ancilla-assisted protocol for QPV.
Then, we show that the ancilla-assisted protocol can also be transformed to the
prepare-and-measure protocol, which requires no ancilla systems.
Specifically, we demonstrate the efficacy of our protocols with the verification of
various quantum gates as well as the processes of some well-known quantum
circuits. In this way, we demonstrate that our protocols can provide an exponential
improvement over QPT and a quadratic improvement over DFE.
Moreover, these protocols are readily applicable with current experimental techniques
as only local measurements are required.
Last but not least, we show that our protocols for verifying quantum processes
can be easily adapted to the verification of quantum measurements.

\section{Choi-Jamio{\l}kowski isomorphism}%
Consider the (unnormalized) maximally entangled bipartite state
$\ket{\psi}=\sum_{k=1}^{d}\ket{k}_A\ket{k}_S$
between a quantum system $S$ and an ancilla system $A$,
where $\{\ket{k}\}_{k=1}^d$ represents an orthonormal basis.
For a quantum process $\cE$ acting only on the system $S$ of $\ket{\psi}$,
the output state is given by
\begin{equation}\label{eq:ChoiMat}
  \Upsilon_{\cE}=\bigl(\cI\otimes\cE\bigr)\bigl(\ket{\psi}\bra{\psi}\bigr)=
  \sum_{k,l=1}^{d}\ket{k}\bra{l}\otimes\cE\bigl(\ket{k}\bra{l}\bigr)\,,
\end{equation}
which is also called the Choi matrix of the process $\cE$.
Given the Choi matrix $\Upsilon_{\cE}$, the process $\cE$ can be obtained as
\begin{equation}\label{eq:ChoiRep}
  \cE(\rho)=\tr_{A}\bigl[(\rho^T\otimes\openone)\Upsilon_{\cE}\bigr]\,.
\end{equation}
The relations in Eqs.~\eqref{eq:ChoiMat} and \eqref{eq:ChoiRep} are known as
the Choi-Jamio{\l}kowski isomorphism \cite{Jamiolkowski1972, Choi1975},
an isomorphism between the Choi matrix ${\Upsilon_{\cE}\in\cB(\cH)\otimes
\cB(\cH)}$, and the linear map ${\cE\!\!: {\cB(\cH)\to\cB(\cH)}}$.
The Choi-Jamio{\l}kowski isomorphism also
implies that $\cE$ is a completely positive map if and only if the Choi matrix
$\Upsilon_{\cE}$ is positive semidefinite.
Note that here we do not require $\cE$ to be trace preserving. The benefit of
relaxing this restriction is that we can deal with the situation when
postselection or particle losses are allowed; see Sec.~\ref{sec:nonQPV} for more details.

Tomographically, Eq.~\eqref{eq:ChoiRep} implies that once the Choi matrix
$\Upsilon_{\cE}$ is determined, all the information on the process $\cE$ is
also obtained. This inspired the so-called ancilla-assisted QPT
\cite{DAriano.LoPresti2001,Altepeter.etal2003}, which requires additional 
ancilla systems but only fixed entangled input states.

\section{Ancilla-assisted quantum process verification}%
Similarly to the case of QPT, by making use of the Choi-Jamio{\l}kowski
isomorphism, one can verify the quantum process $\cK$ indirectly by verifying
the corresponding Choi state
${\rho_{\cK}:={\Upsilon_{\cK}}/{\tr(\Upsilon_{\cK})}}$ instead.  Especially,
when $\rho_{\cK}$ is pure, we can apply the QSV protocols for verifying $\cK$.
For simplicity, we first consider the case where $\cK$ is a unitary gate
$U$, i.e., the verification of quantum gates or quantum circuits.
Furthermore, we employ the entanglement gate fidelity
\begin{equation}
  F_e(\cE, U):=F(\rho_{\cE},\rho_{U})=\tr(\rho_{\cE}\rho_{U})
  \label{eq:fidelity}
\end{equation}
as a benchmark,
which is also directly related to the more widely-used notion, the average gate
fidelity by the relation \cite{Horodecki.etal1999},
\begin{equation}
  \bar{F}(\cE, U)=\frac{dF_e(\cE, U)+1}{d+1}\,.
  \label{eq:aveFidelity}
\end{equation}

Now, we are ready to formally define the QPV problem for quantum gates. Suppose
we have a quantum process $\cE$ which is promised to be a gate $U$. We want to
use the hypothesis testing method to verify this claim with a high confidence,
$1-\delta$.  In the ideal case, we want to distinguish two cases, $F_e(\cE,
U)=1$ and $F_e(\cE, U)\le 1-\epsilon$. The figure of merit is the sample
complexity, i.e., how many copies of input states are required with respect to
the infidelity $\epsilon$ and the confidence level $1-\delta$.

Due to the Choi-Jamio{\l}kowski isomorphism, we can transform the verification
of $U$ to the verification of the pure Choi state $\rho_U$. Furthermore, it can be
easily shown that $\rho_U$ is maximally entangled. Hence, with either
nonadaptive or adaptive QSV strategies $\Omega$
\cite{Hayashi.etal2006,Zhu.Hayashi2019a,Yu.etal2019,Li.etal2019}, we can achieve
the spectral gap
given by $\nu(\Omega)=(d-1)/d$. In practice, we may have more restrictions to
the allowed measurements, e.g., each party should be measured locally for verifying the multi-qubit gates.
However, according to the result in QSV
\cite{Pallister.etal2018}, we can show that the worst-case failure probability
in each run is always bounded by
\begin{equation}
  \max_{F(\rho_\cE,\rho_U)\le 1-\epsilon}\tr(\Omega\rho_\cE)
  \le 1-\epsilon\nu(\Omega)\,.
  \label{eq:errorAA}
\end{equation}
This implies the following proposition for the efficiency of this
ancilla-assisted protocol.

\begin{proposition}\label{thm:AAPV}
  For any quantum gate $U$, we can verify it to the entanglement gate fidelity
  $1-\epsilon$ and confidence level $1-\delta$ with
  $N\approx\frac1{v(\Omega)}\epsilon^{-1}\ln\delta^{-1}$ input states
  by verifying the corresponding Choi state $\rho_U$ with the QSV protocol
  $\Omega$.
\end{proposition}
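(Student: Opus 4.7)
The plan is to reduce the verification of the gate $U$ to the verification of the pure target Choi state $\rho_U$, and then invoke the standard QSV sample-complexity bound stated in Eq.~\eqref{eq:QSVparameter}. Concretely, I would first note that $\rho_U$ is indeed a pure state (and, moreover, maximally entangled), so the QSV framework applies directly. In each run of the protocol, I prepare the maximally entangled state $\ket{\psi}/\sqrt{d}$ on $AS$, send the system $S$ through the unknown process $\cE$, and obtain exactly the Choi state $\rho_\cE=\Upsilon_\cE/\tr(\Upsilon_\cE)$ on $AS$. Then I apply a randomly chosen test $\Omega_i$ from the QSV strategy $\Omega$ designed for $\rho_U$.

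The second step is to translate the operational promise ``$F_e(\cE,U)\ge 1-\epsilon$ vs.\ $F_e(\cE,U)\le 1-\epsilon$'' into a statement about the Choi states. Here I use the definition in Eq.~\eqref{eq:fidelity}, which gives $F_e(\cE,U)=F(\rho_\cE,\rho_U)=\tr(\rho_\cE\rho_U)$. Thus rejecting the alternative hypothesis $F_e(\cE,U)\le 1-\epsilon$ is exactly rejecting $F(\rho_\cE,\rho_U)\le 1-\epsilon$, which is the standard QSV problem for the target $\rho_U$.

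The third step is to quantify the worst-case per-round pass probability. Under the alternative hypothesis, the state $\rho_\cE$ of each round is arbitrary subject to $F(\rho_\cE,\rho_U)\le 1-\epsilon$. Invoking the bound in Eq.~\eqref{eq:errorAA} (inherited from the QSV analysis of \cite{Pallister.etal2018}), each round passes with probability at most $1-\epsilon\nu(\Omega)$. Since the $N$ rounds use independent fresh copies of the input state, the probability that all $N$ tests pass is at most $[1-\epsilon\nu(\Omega)]^N$, matching Eq.~\eqref{eq:failureProbability}. Requiring this to be at most $\delta$ and solving exactly as in Eq.~\eqref{eq:QSVparameter} gives
\begin{equation}
  N\ge\frac{\ln\delta^{-1}}{\ln\bigl\{[1-\nu(\Omega)\epsilon]^{-1}\bigr\}}
  \approx\frac1{\nu(\Omega)}\epsilon^{-1}\ln\delta^{-1}\,,
\end{equation}
which is the claimed sample complexity.

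Most of the work is essentially a bookkeeping reduction, so there is no single ``hard'' technical step. The main subtlety to be careful about is the direction of the reduction: one must verify that every Choi state $\rho_\cE$ produced by a (not necessarily trace-preserving) completely positive map $\cE$ is a legitimate input for the QSV bound, and that the supremum in Eq.~\eqref{eq:errorAA} is indeed attained over a set that includes all physically realizable $\rho_\cE$ with $F_e(\cE,U)\le 1-\epsilon$. Once this is checked, the proposition follows immediately by composing the Choi--Jamio{\l}kowski correspondence with the QSV sample-complexity bound.
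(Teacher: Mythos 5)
Your proposal is correct and follows essentially the same route as the paper: reduce gate verification to QSV of the pure (maximally entangled) Choi state via $F_e(\cE,U)=F(\rho_\cE,\rho_U)$, bound the per-round pass probability by $1-\epsilon\nu(\Omega)$ using Eq.~\eqref{eq:errorAA}, and apply the standard QSV sample-complexity count of Eq.~\eqref{eq:QSVparameter}. Your closing remark about the realizable Choi states forming a subset of the states over which the QSV supremum is taken is exactly the point the paper makes when explaining why Eq.~\eqref{eq:errorAA} is an inequality rather than an equality for trace-preserving $\cE$.
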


We note that the inequality (instead of equality) in Eq.~\eqref{eq:errorAA}
results from the important difference between the general QSV and the QSV of
Choi states that when $\cE$ is restricted to trace-preserving processes,
$\rho_\cE$ also has the extra restriction that $\tr_S(\rho_\cE)=\openone/d$. This
restriction makes it possible to further improve the efficiency. However, if we
relax the restriction of trace-preserving, the inequality is always
attained; more details are reported in Sec.~\ref{sec:nonQPV}.  In fact, the above discussion
provides a general method to verify all the quantum processes whose
corresponding Choi states are pure.  We call this approach, which verifies the
quantum processes indirectly by QSV of the corresponding Choi states,
\textit{ancilla-assisted quantum process verification} (AAPV).

\section{Prepare-and-measure quantum process verification}%
The AAPV approach is easy to understand and straightforward to use with
the help of QSV.  However, the double requirements of additional ancilla
systems and maximally entangled input states are sometimes difficult to
achieve in experiments. This difficulty can be avoided by considering
the prepare-and-measure protocol. More precisely, we show that one
can always convert a one-way adaptive QSV protocol
(with nonadaptive QSV as a special case)
to aprepare-and-measure QPV (PMPV) protocol without any efficiency loss.

In a PMPV protocol, we randomly choose an input state $\rho_i$ with probability
$p_i$ and test the output state with the measurement $\{N_i, \openone-N_i\}$.
If the measurement outcome is $N_i$, then we say that channel $\cE$ passes the
test; otherwise we say that $\cE$ fails the test. As in QSV, we require that
the target gate $U$ always passes the test, i.e.,
\begin{equation}
  \tr(U\rho_iU^\dagger N_i)=1\,.
  \label{eq:UPass}
\end{equation}
For convenience, we denote the PMPV protocol 
\begin{equation}
  \Xi=\sum_{i}p_i\rho_i^T\otimes N_i\,.
  \label{eq:PMPV}
\end{equation}
Then in each run the worst-case failure probability is given by
\begin{equation}
  \max_{F_e(\cE,U)\le 1-\epsilon}\sum_ip_i\tr[\cE(\rho_i)N_i]
  =\max_{F_e(\cE,U)\le 1-\epsilon}\tr(\Xi\Upsilon_\cE)\,.
  \label{eq:efficiencyQPV}
\end{equation}
Given that $N_i\le\openone$, Eq.~\eqref{eq:UPass} is equivalent to
\begin{equation}
  \tr(\Xi\Upsilon_U)=1\,.
  \label{eq:UPassXi}
\end{equation}

For the normalized Choi state $\rho_U=\frac{1}{d}\Upsilon_{U}$,
the one-way adaptive QSV protocol
(classical communication from $A$ to $S$, with no communication, i.e., nonadaptive
QSV as a special case) takes on the general form
\cite{Yu.etal2019}
\begin{eqnarray}\label{eq:oneWayQSV}
  \Omega = \sum_{i} M_i \otimes N_i\,,
\end{eqnarray}
such that $\{M_i\}_i$ is a positive operator-valued measure (POVM) on the
ancilla system $A$, i.e., $\sum_iM_i=\openone$ and $\{N_i, \openone-N_i\}$ is
a pass-or-fail test on system $S$ which depends on the measurement outcome of
$\{M_i\}_i$. To ensure that the target state always passes the test, $\Omega$
must satisfy
\begin{equation}
  \tr(\Omega\rho_U)=\frac{1}{d}\tr(\Omega\Upsilon_{U})=1\,.
  \label{eq:oneWayPass}
\end{equation}

Now, we can convert any one-way adaptive QSV protocol for $\rho_U$ in
Eq.~\eqref{eq:oneWayQSV} to a PMPV protocol of the form of Eq.~\eqref{eq:PMPV} by letting
\begin{equation}
  p_i=\frac{\tr(M_i)}{d},~~\rho_i=\frac{M_i^T}{\tr(M_i)}\,,
  \label{eq:oneWayInput}
\end{equation}
where $\sum_ip_i=1$ follows from $\sum_iM_i=\openone$. Further, one can
easily verify that
\begin{equation}
  \Xi=\frac{1}{d}\Omega,~~\tr(\Xi\Upsilon_U)=1
  \label{eq:Xi}
\end{equation}
by considering Eqs.~\eqref{eq:oneWayPass} and \eqref{eq:oneWayInput}.
Then the following proposition characterizes the resources required in
the derived PMPV protocol.
\begin{proposition}\label{thm:PMPV}
  For any quantum gate $U$, the one-way ($A\to S$) adaptive AAPV protocol
  $\Omega$ can always be converted to a PMPV protocol $\Xi$ with the
  worst-case failure probability satisfying
  \begin{equation}
    \max_{F_e(\cE,U)\le 1-\epsilon}\tr(\Xi\Upsilon_\cE)
    =\max_{F(\rho_\cE,\rho_U)\le 1-\epsilon}\tr(\Omega\rho_\cE)
    \le 1-\epsilon\nu(\Omega)\,.
    \label{eq:error}
  \end{equation}
  Hence, for verifying the quantum gate $U$, the efficiency of the deduced PMPV
  protocol $\Xi$ is equal to the efficiency of the AAPV protocol $\Omega$.
\end{proposition}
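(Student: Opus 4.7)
The plan is to establish Proposition~\ref{thm:PMPV} in three short steps. First, I would verify that the prescription in Eq.~\eqref{eq:oneWayInput} actually defines a legitimate PMPV protocol and satisfies $\Xi = \Omega/d$. Positivity of the POVM elements $M_i$ and the completeness relation $\sum_i M_i = \openone$ immediately give $p_i = \tr(M_i)/d \geq 0$ and $\sum_i p_i = \tr(\openone_A)/d = 1$. Because transposition preserves positivity and trace, $\rho_i = M_i^T/\tr(M_i)$ is a valid density operator. Substituting into Eq.~\eqref{eq:PMPV} and using $(M_i^T)^T = M_i$ gives $\Xi = \sum_i (M_i/d) \otimes N_i = \Omega/d$, and the pass condition $\tr(\Xi\Upsilon_U) = 1$ follows from Eq.~\eqref{eq:oneWayPass} combined with $\Upsilon_U = d\rho_U$, which is already recorded as Eq.~\eqref{eq:Xi}.

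Second, I would translate the worst-case failure probability of $\Xi$ into that of $\Omega$ via the Choi-Jamio{\l}kowski isomorphism. For any trace-preserving $\cE$ one has $\tr(\Upsilon_\cE) = d$, so $\Upsilon_\cE = d\rho_\cE$ and therefore
\begin{equation*}
  \tr(\Xi\Upsilon_\cE) = \tr\bigl((\Omega/d)(d\rho_\cE)\bigr) = \tr(\Omega\rho_\cE).
\end{equation*}
Since $\cE \mapsto \rho_\cE$ is a bijection between trace-preserving processes and their Choi states, and since $F_e(\cE,U) = F(\rho_\cE,\rho_U)$ by Eq.~\eqref{eq:fidelity}, the two constrained optimizations in Eq.~\eqref{eq:error} range over the same objects with the same integrand. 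Hence the equality in Eq.~\eqref{eq:error} holds term by term.

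Third, the inequality $\tr(\Omega\rho_\cE) \le 1 - \epsilon\nu(\Omega)$ on the restricted maximum is precisely the AAPV bound in Eq.~\eqref{eq:errorAA} that underpins Proposition~\ref{thm:AAPV}: the inequality was stated there for arbitrary $\rho_\cE$ with $F(\rho_\cE,\rho_U) \le 1-\epsilon$, and therefore it applies in particular to the subfamily of Choi states coming from trace-preserving channels. Feeding this single-run failure bound into the general QSV sample-complexity estimate in Eq.~\eqref{eq:QSVparameter}, now applied to $\Xi$, yields the same $N \approx \nu(\Omega)^{-1}\epsilon^{-1}\ln\delta^{-1}$ as for the AAPV protocol, proving the efficiency statement.

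I expect no genuine obstacle, as the result is essentially a change of variables through the Choi-Jamio{\l}kowski isomorphism together with Proposition~\ref{thm:AAPV}. The only point that requires care is bookkeeping the transposes: the transpose built into $\rho_i$ in Eq.~\eqref{eq:oneWayInput} must cancel the transpose appearing in the PMPV representation of Eq.~\eqref{eq:PMPV} so that $\Xi$ acts on the ancilla side exactly as $(1/d)\sum_i M_i \otimes N_i$. Once this is verified, everything else reduces to inserting $\Upsilon_\cE = d\rho_\cE$ and citing the previously established AAPV bound.
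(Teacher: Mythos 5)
Your proof is correct and follows essentially the same route as the paper: verifying that the substitution $p_i=\tr(M_i)/d$, $\rho_i=M_i^T/\tr(M_i)$ yields $\Xi=\Omega/d$, using $\Upsilon_\cE=d\rho_\cE$ for trace-preserving $\cE$ together with $F_e(\cE,U)=F(\rho_\cE,\rho_U)$ to identify the two maximizations, and then invoking the bound of Eq.~\eqref{eq:errorAA}. The care you take with the transpose cancellation and with the restriction of the maximum to Choi states of trace-preserving channels matches the paper's own reasoning.
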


A few remarks on the difference between $\Xi$ and $\Omega$ are in order.
First, although the AAPV $\Omega$ requires an ancilla system, it has the benefit that
only one single kind of (despite entangled) input state is needed. On the contrary,
in PMPV $\Xi$, no ancilla system is required, but many different kinds of input states are
needed.  Second, in the AAPV protocol $\Omega=\sum_{i} M_i \otimes N_i$, the POVM 
$\{M_i\}_i$ is usually preferred to be written as a convex combination of
projective measurements which are more experiment-friendly; this is not
necessary for PMPV, however, as $\rho_i={M_i^T}/{\tr(M_i)}$ are just different kinds of
input states.  For example, when no extra restrictions are imposed, instead of
using mutually unbiased bases as the input states, one can use the general
spherical $2$-designs, which are more easily constructed for a general
$d$-dimensional space \cite{Zhu.Hayashi2019a}.

\section{Applications}%
Quantum gates are basic yet essential components in various quantum information
processing tasks. Here we demonstrate that our schemes can verify various
quantum gates and quantum circuits efficiently and practically.
For demonstration, we take the verification of the {\sc cnot} gate, the Clifford
circuits, and the $n$-qubit $C^{(n-1)}Z$ and $C^{(n-1)}X$
gates as examples; more applications can be found in Appendix~\ref{app:MoreApp}.

The {\sc cnot} gate operates on two qubits, which flips the second qubit if and
only if the first qubit is $\ket{1}$.
To verify it, we use the four-qubit entangled state
\begin{eqnarray}
  \ket{\psi}=\ket{0000}+\ket{0101}+\ket{1010}+\ket{1111}
\end{eqnarray}
as input in the AAPV protocol.
Then the corresponding Choi matrix is given by
\begin{equation}
  \Upsilon_{\text{\sc CNOT}}=\bigl(\cI\otimes\cC_{\text{\sc
  NOT}}\bigl)\bigl(\ket{\psi}\bra{\psi}\bigl)=\ket{\phi}\bra{\phi}\,,
\end{equation}
where $\cC_{\text{\sc NOT}}$ denotes the corresponding operation of the {\sc
cnot} gate and
\begin{equation}
  \ket{\phi}=\ket{0000}+\ket{0101}+\ket{1110}+\ket{1011}\,.
\end{equation}
One notes that $\ket{\phi}$ is a stabilizer state, which can be verified efficiently by
\begin{equation}\label{eq:QSV_CNOT}
  \Omega_{\Upsilon_{\text{\sc CNOT}}}=\frac{1}{4}\bigl(P_{ZXZX}^{+}+P_{IZZZ}^{+}+P_{ZZIZ}^{+}+P_{XXXI}^{+}\bigl)\,,
\end{equation}
where $X$ and $Z$ are Pauli operators, and the superscript $+$ indicates the
projector onto the eigenspace with eigenvalue 1.
The spectral gap is given by $\nu(\Omega_{\Upsilon_{\text{\sc CNOT}}})=1/4$.
Accordingly, we can construct the PMPV protocol by employing the relation in
Eq.~\eqref{eq:Xi}.

The efficiency of verifying the \textsc{cnot} gate can be further improved by employing
more measurement settings. In fact, together with the Hadamard gate and the
phase gate, the {\sc cnot} gate generates the so-called
Clifford circuits which are key components in many schemes for quantum error correction
and become universal for quantum computation when augmented with certain state preparations.
Here, we show that our schemes are able to verify an arbitrary Clifford circuit
efficiently; see the proposition below.
\begin{proposition}
  For any $n$-qubit Clifford circuit, we can construct the AAPV and PMPV protocols
  for verifying it to the entanglement fidelity $1-\epsilon$ and the confidence
  level $1-\delta$ with the number of input states satisfying
  \begin{equation}
    N\approx\frac{2^{2n}-1}{2^{2n-1}}\epsilon^{-1}\ln\delta^{-1}\le2\epsilon^{-1}\ln\delta^{-1}\,.
  \end{equation}
  Furthermore, all the measurements required are local Pauli measurements.
\end{proposition}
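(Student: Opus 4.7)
The plan is to exploit the stabilizer structure of the Choi state $\rho_U$ for a Clifford circuit $U$ and design a uniform stabilizer-sampling verifier. Because $U$ is Clifford and the maximally entangled state is a stabilizer state, $\rho_U = (\mathcal{I}\otimes U)\ket{\psi}\bra{\psi}(\mathcal{I}\otimes U^\dagger)/d$ is a pure stabilizer state on $2n$ qubits whose stabilizer group $\mathcal{S}$ has order $d^2=2^{2n}$ and consists entirely of (signed) tensor products of single-qubit Paulis. This immediately guarantees that every test operator will be implementable by local Pauli measurements, and also gives the resolution of identity $|\phi\rangle\langle\phi|=d^{-2}\sum_{g\in\mathcal{S}} g$, where $\ket{\phi}$ is the (normalized) Choi vector of $U$.

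Next I would define the AAPV strategy as: pick a non-identity stabilizer $g\in\mathcal{S}\setminus\{I\}$ uniformly (probability $1/(d^2-1)$) and perform the projective pass-or-fail test $\{P_g^+,\openone-P_g^+\}$ with $P_g^+=\tfrac12(\openone+g)$. Target passage $\tr(\Omega\rho_U)=1$ follows because $g\ket{\phi}=\ket{\phi}$ for every $g\in\mathcal{S}$. Averaging the tests gives
\begin{equation}
\Omega=\frac{1}{d^2-1}\sum_{g\ne I}\frac{\openone+g}{2}
= \frac{d^2-2}{2(d^2-1)}\openone + \frac{d^2}{2(d^2-1)}\ket{\phi}\bra{\phi},
\end{equation}
where I used $\sum_{g\in\mathcal{S}} g=d^2\ket{\phi}\bra{\phi}$. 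Since $\Omega$ has only two distinct eigenvalues, $1$ on $\ket{\phi}$ and $(d^2-2)/[2(d^2-1)]$ on its orthogonal complement, the spectral gap is $\nu(\Omega)=d^2/[2(d^2-1)]$. Substituting $d=2^n$ and using Proposition~\ref{thm:AAPV} yields the claimed sample complexity $N\approx(2^{2n}-1)/2^{2n-1}\cdot\epsilon^{-1}\ln\delta^{-1}$, which is bounded by $2\epsilon^{-1}\ln\delta^{-1}$ because $(2^{2n}-1)/2^{2n-1}=2-2^{1-2n}<2$.

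It remains to produce the PMPV version and to argue locality. Each stabilizer factorizes as $g=g_A\otimes g_S$ with both factors tensor products of single-qubit Paulis, so the test is executed by measuring local Paulis on the ancilla and on the system and accepting iff the product of outcomes is $+1$; this is manifestly a one-way ($A\to S$) adaptive protocol, so Proposition~\ref{thm:PMPV} converts it to a PMPV scheme $\Xi=\Omega/d$ with identical efficiency. Concretely, the POVM elements on $A$ induced by the eigenprojectors of $g_A$ turn, via Eq.~\eqref{eq:oneWayInput}, into local Pauli eigenstate inputs $\rho_i$, and the system-side tests $N_i$ remain local Pauli projectors, so the PMPV protocol also uses only local Pauli state preparations and measurements.

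The only nontrivial step is the spectral gap calculation, and the main obstacle to watch for is the algebraic bookkeeping of the stabilizer sum (making sure phases are correct so that every $g\in\mathcal{S}$ fixes $\ket{\phi}$) and verifying that the measurement protocol truly qualifies as one-way adaptive in the sense required by Proposition~\ref{thm:PMPV}; the rest is a direct application of the previously established AAPV$\to$PMPV reduction and of Eq.~\eqref{eq:QSVparameter}.
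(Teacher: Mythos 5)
Your proof is correct and follows essentially the same route as the paper: identify the Choi state of an $n$-qubit Clifford circuit as a $2n$-qubit stabilizer state and verify it with the uniform test over the full stabilizer group, whose spectral gap $\nu(\Omega)=2^{2n-1}/(2^{2n}-1)$ you derive explicitly where the paper simply cites the stabilizer-state verification result of Pallister et al. Your spectral-gap computation and the AAPV$\to$PMPV conversion via Proposition~\ref{thm:PMPV} are both sound.
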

\begin{proof}
  For the Clifford circuit $\cC$, the ancilla-assisted $\cI\otimes\cC$ remains
  as a Clifford circuit.
  Then the output Choi state is a $2n$-qubit stabilizer state since the input
  maximally entangled $\ket{\psi}=(\ket{00}+\ket{11})^{\otimes n}$ is
  a $2n$-qubit stabilizer state \cite{QCQI2010}.
  The verification of all stabilizer states \cite{Pallister.etal2018}
  can be constructed systematically by using the full set of stabilizers with
  $\nu(\Omega)={2^{2n-1}}/(2^{2n}-1)$, thus
  $N\approx\frac{2^{2n}-1}{2^{2n-1}}\epsilon^{-1}\ln\delta^{-1}
  \le2\epsilon^{-1}\ln\delta^{-1}$.
\end{proof}

We note that the best-known method so far for the verification of Clifford circuits is
DFE, which, however, requires $O(\epsilon^{-2}\log\delta^{-1})$ input states with
infidelity $\epsilon$ and confidence level $1-\delta$, so that our QPV protocols are
quadratically faster. In fact, from Propositions~\ref{thm:AAPV} and \ref{thm:PMPV},
the quadratic improvement of our QPV protocols over DFE with respect to the infidelity
$\epsilon$ is universal. In addition, although the physical settings of RB are
rather different, our QPV protocols are still quadratically more efficient
according to the statistical analysis in Refs.~\cite{Magesan.etal2011} and \cite{Wallman.Flammia2014}
if sample complexity is considered.

\begin{figure}
  \centering
  \includegraphics[width=.3\textwidth]{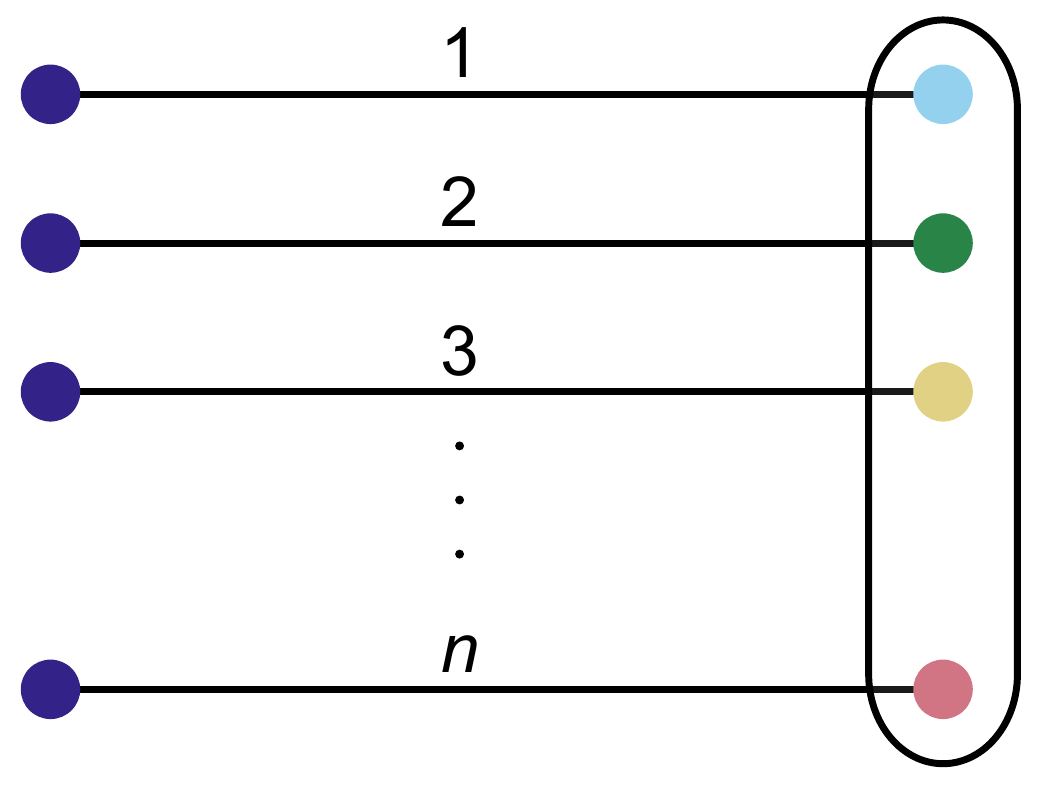}
  \caption{Hypergraph corresponding to the output Choi state
    $\openone^{\otimes n}\otimes C^{(n-1)}Z\ket{\tilde{\psi}}$.  An
  $(n+1)$-coloring of the hypergraph is also shown.}
  \label{fig:CZ}
\end{figure}

Besides the Clifford gates, our protocols can also be used for the efficient
verification of other important classes of quantum gates. For example, by taking
advantage of the verification method for hypergraph states
\cite{Zhu.Hayashi2019b}, we can verify $n$-qubit $C^{(n-1)}Z$ and $C^{(n-1)}X$
gates including the Toffoli (\textsc{ccnot}) gate, with
$N\approx(n+1)\epsilon^{-1}\ln\delta^{-1}$ input states.
Here, we only show the
verification of $C^{(n-1)}Z$ gates. The verification of $C^{(n-1)}X$ gates is
equivalent to that of $C^{(n-1)}Z$ gates up to some local unitary
transformations.

For the construction of the Choi state, instead of choosing the input state as
$\ket{\psi}=(\ket{00}+\ket{11})^{\otimes n}$, we use
\begin{equation}
  \ket{\tilde\psi}=(\ket{0}\ket{+}+\ket{1}\ket{-})^{\otimes n}
  =CZ^{\otimes n}\ket{+}^{\otimes 2n},
  \label{eq:inputCZ}
\end{equation}
which is equivalent to $\ket{\psi}$ up to a local unitary transformation.
Note also that we have omitted the normalization in the derivation of
Eq.~\eqref{eq:inputCZ}.  Then after the unitary gate $\openone^{\otimes
n}\otimes C^{(n-1)}Z$, the output Choi state is a hypergraph state as shown in
Fig.~\ref{fig:CZ}. One can easily see that the hypergraph is $n+1$ colorable,
as illustrated in the figure.
Thus, by using the coloring protocol proposed in Ref.~\cite{Zhu.Hayashi2019b},
we can construct the AAPV protocol with the spectral gap given by
$\nu(\Omega)=1/(n+1)$.
Hence, the number of input states required for verifying the $C^{(n-1)}Z$ gate
to the fidelity $1-\epsilon$ and the confidence level $1-\delta$ is
$N\approx(n+1)\epsilon^{-1}\ln\delta^{-1}$.
Similarly, one gets the PMPV protocol by using the relation in
Eq.~\eqref{eq:Xi}.

Last but not least, our method is also applicable to the verification of
the processes of some well-known quantum algorithms.
In Appendix~\ref{app:VofDJ}, we take the Deutsch-Jozsa algorithm
\cite{DeutschJozsa1992} as an example to illustrate this point.

\section{Verification of non-trace-preserving processes}\label{sec:nonQPV}
In real experiments, the quantum process $\cE$ is often not
trace preserving due to postselection or particle losses. In this section,
we show that our QPV protocols still work in such a scenario.

We consider the AAPV protocol first and define the Choi matrix as
\begin{equation}\label{eq:ChoiMatA}
  \Upsilon_{\cE}=\bigl(\cI\otimes\cE\bigr)\bigl(\ket{\psi}\bra{\psi}\bigr)=
  \sum_{k,l=1}^{d}\ket{k}\bra{l}\otimes\cE\bigl(\ket{k}\bra{l}\bigr)\,,
\end{equation}
where $\ket{\psi}=\sum_{k=1}^{d}\ket{k}_A\ket{k}_S$, and the
Choi-Jamio{\l}kowski isomorphism
\begin{equation}\label{eq:ChoiRepA}
  \cE(\rho)=\tr_{A}\bigl[(\rho^T\otimes\openone)\Upsilon_{\cE}\bigr]
\end{equation}
still holds. Further, we define the corresponding Choi state as
\begin{equation}
  \rho_{\cE}:={\Upsilon_{\cE}}/{\tr(\Upsilon_{\cE})}\,.
  \label{eq:ChoiState}
\end{equation}
When $\cE$ is trace preserving, $\rho_\cE$ can be interpreted as the output of
the process $\cI\otimes\cE$ for the (normalized) input state $\ket{\psi}$. When
$\cE$ is not trace preserving (more precisely, is trace decreasing), $\rho_\cE$
can be viewed as the output state after postselection. That is,
an output is not guaranteed to be obtained for every input.
Hence, we also call $\rho_\cE$
the postselected Choi state. We can apply the QSV protocol for verifying $\cK$
when $\rho_\cK$ is a pure state. For a quantum process $\cK$, $\rho_\cK$ is
a pure state if and only if
\begin{equation}
  \cK(\rho)=K\rho K^\dagger\,.
  \label{eq:postProcess}
\end{equation}
Without loss of generality, we assume that $K$ is of full rank; then the
corresponding operation with post-selection can be written as
\begin{equation}
  \tilde\cK(\rho)=\frac{K\rho K^\dagger}{\tr(K\rho K^\dagger)}\,.
  \label{eq:postProcessN}
\end{equation}
Note that under post-selection, $K$ can be verified only up to a constant
factor $c$, i.e., $\tilde{\cK}$ is invariant when ${K\to cK}$. Hence, instead of
choosing the number of copies of the input states as the figure of merit,
we choose the number of copies of the post-selected
output states, which is invariant under $K\to cK$.
Still, we rely on the entanglement fidelity defined as
\begin{equation}
  F_e(\cE, \cK):=F(\rho_{\cE},\rho_{\cK})\,.
  \label{eq:fidelityA}
\end{equation}
Furthermore, one can easily see that when we relax the trace-preserving
condition, $\rho_\cE$ can be any quantum state according to the
Choi-Jamio{\l}kowski isomorphism. Then, the following proposition follows
directly from the corresponding result in QSV.

\begin{proposition}\label{thm:AAPVpost}
  For any quantum process with postselection $\tilde\cK(\rho)=K\rho
  K^\dagger/\tr(K\rho K^\dagger)$, we can verify $\tilde{\cK}$
  to the entanglement fidelity $1-\epsilon$ and confidence level $1-\delta$ with
  $N\approx\frac1{v(\Omega)}\epsilon^{-1}\ln\delta^{-1}$ postselected output
  states by verifying the postselected Choi state $\rho_\cK$ with the QSV protocol
  $\Omega$.
\end{proposition}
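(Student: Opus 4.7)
The plan is to reduce the statement directly to standard QSV for the pure Choi state $\rho_{\cK}$, invoking the QSV sample-complexity bound~(\ref{eq:QSVparameter}). The key point is that when $\cK(\rho) = K\rho K^\dagger$ with $K$ of full rank, $\Upsilon_{\cK}$ has rank one, so $\rho_{\cK} = \Upsilon_{\cK}/\tr(\Upsilon_{\cK})$ is pure, and any QSV protocol $\Omega$ satisfying $\tr(\Omega\rho_{\cK})=1$ applies to it. All I have to do is convert this QSV guarantee into a QPV guarantee for $\tilde\cK$ and correctly identify the unit in which sample complexity is measured.

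First I would make explicit that each postselected output of the lab process $\cE$ is precisely one copy of the Choi state $\rho_{\cE}$: prepare the normalized maximally entangled state $\ket{\psi}/\sqrt{d}$, apply $\cI \otimes \cE$, declare a run ``successful'' when the postselection fires, and record the remaining conditional state. By Eq.~(\ref{eq:ChoiRepA}) this conditional state is $\rho_{\cE}= \Upsilon_{\cE}/\tr(\Upsilon_{\cE})$. Since the figure of merit, the entanglement fidelity $F_e(\cE,\cK) = F(\rho_{\cE},\rho_{\cK})$, is defined on normalized Choi states, it is invariant under the rescaling $K\to cK$, matching the invariance of $\tilde{\cK}$ itself; hence it is natural to measure resources in postselected output copies rather than in raw input copies.

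Next I would observe that dropping the trace-preserving restriction removes the residual constraint $\tr_S(\rho_{\cE}) = \openone/d$ that appeared in the trace-preserving setting. By the Choi--Jamio{\l}kowski isomorphism, once $\cE$ is only required to be completely positive, $\rho_{\cE}$ ranges over all quantum states on $\cH_A\otimes\cH_S$. Consequently the worst-case single-run failure probability of $\Omega$ coincides exactly with the QSV worst case,
\begin{equation}
  \max_{F(\rho_{\cE},\rho_{\cK})\le 1-\epsilon} \tr(\Omega\rho_{\cE}) = 1 - \epsilon\,\nu(\Omega),
\end{equation}
with equality attained (unlike Eq.~(\ref{eq:errorAA})) because no side constraint on $\rho_{\cE}$ remains.

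Combining the two ingredients, the probability that $N$ independent postselected Choi states all pass $\Omega$ while $F_e(\cE,\cK)\le 1-\epsilon$ is upper bounded by $[1-\epsilon\,\nu(\Omega)]^N$, so (\ref{eq:QSVparameter}) yields $N\approx \nu(\Omega)^{-1}\epsilon^{-1}\ln\delta^{-1}$, as asserted. I do not anticipate a real obstacle; the only subtlety is bookkeeping, namely justifying that every i.i.d.\ postselected output really is an independent copy of $\rho_{\cE}$ (so that the product bound applies) and that the equality above is the right QSV bound once the trace-preserving constraint is removed.
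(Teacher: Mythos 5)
Your proposal is correct and follows essentially the same route as the paper, which justifies this proposition in one line by noting that $\rho_\cK$ is pure for $\cK(\rho)=K\rho K^\dagger$, that relaxing trace preservation lets $\rho_\cE$ range over all states so the QSV worst-case bound applies (with equality), and that each postselected output is a copy of $\rho_\cE$, whence Eq.~\eqref{eq:QSVparameter} gives the sample complexity. You have merely filled in the bookkeeping the paper leaves implicit, and the details you supply (rank-one Choi matrix, invariance under $K\to cK$, i.i.d.\ postselected copies) are all consistent with the paper's discussion.
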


As in the case of verification for quantum gates, we can show that a one-way (${A\to S}$)
adaptive AAPV protocol can be converted to a PMPV protocol without any
efficiency loss. Still, we denote a PMPV protocol 
\begin{equation}
  \Xi=\sum_{i}p_i\rho_i^T\otimes N_i\,,
  \label{eq:PMPVA}
\end{equation}
with the requirement that $\tr[\tilde\cK(\rho_i)N_i]=1$, i.e.,
\begin{equation}
  \tr(K_i\rho_i K_i^\dagger N_i)=\tr(K_i\rho_iK_i^\dagger)\,.
  \label{eq:UPassA}
\end{equation}
Then in each run (when there is an output state), the worst-case failure
probability is given by
\begin{equation}
  \max_{F_e(\cE,\cK)\le 1-\epsilon}
  \frac{\sum_ip_i\tr[\cE(\rho_i)N_i]}{\sum_ip_i\tr[\cE(\rho_i)]}
  =\max_{F_e(\cE,\cK)\le 1-\epsilon}
  \frac{\tr(\Xi\Upsilon_\cE)}{\tr[\cE(\bar{\rho})]}\,,
  \label{eq:efficiencyQPVA}
\end{equation}
where $\bar{\rho}:=\sum_ip_i\rho_i$.  Given that $N_i\le\openone$,
Eq.~\eqref{eq:UPassA} is equivalent to 
\begin{equation}
  \frac{\tr(\Xi\Upsilon_\cK)}{\tr[\cK(\bar{\rho})]}=1\,.
  \label{eq:UPassXiA}
\end{equation}

For the Choi state $\rho_\cK$, the one-way (${A\to S}$) adaptive QSV protocol
takes on the general form
\begin{eqnarray}\label{eq:oneWayQSVA}
  \Omega = \sum_{i} M_i \otimes N_i\,,
\end{eqnarray}
such that $\{M_i\}_i$ is a POVM on the ancilla system $A$, i.e.,
$\sum_{i=1}^nM_i=\openone$ and $\{N_i, \openone-N_i\}$ is a pass-or-fail test of
system $S$ which depends on the measurement outcome of $\{M_i\}_i$. To ensure
that the target state always passes the test, $\Omega$ also must satisfy
\begin{equation}
  \tr(\Omega\rho_\cK)=\frac{\tr(\Omega\Upsilon_\cK)}{\tr(\Upsilon_\cK)}=1\,.
  \label{eq:oneWayPassA}
\end{equation}

Now, we can convert any one-way adaptive QSV for $\rho_\cK$ in
Eq.~\eqref{eq:oneWayQSVA} to a PMPV of the form of Eq.~\eqref{eq:PMPVA} by letting
\begin{equation}
  p_i=\frac{\tr(M_i)}{d},~~\rho_i=\frac{M_i^T}{\tr(M_i)}\,,
  \label{eq:oneWayInputA}
\end{equation}
where $\sum_ip_i=1$ follows from $\sum_iM_i=\openone$. Further, we have
\begin{equation}
  \bar\rho=\sum_ip_i\rho_i=\frac{\openone}{d},~~
  \tr(\Upsilon_\cK)=\tr[\cK(\openone)]=d\tr[\cK(\bar\rho)]\,.
  \label{eq:rhobar}
\end{equation}
Then, one can easily verify that
\begin{equation}
  \Xi=\frac{1}{d}\Omega,~~
  \frac{\tr(\Xi\Upsilon_\cK)}{\tr[\cK(\bar{\rho})]}=1\,.
\end{equation}
from Eqs.~\eqref{eq:oneWayPassA}, \eqref{eq:oneWayInputA}, and
\eqref{eq:rhobar}.
Similarly, we can also show that
\begin{equation}
  \tr[\cE(\bar\rho)]=\tr[\cE(\frac{\openone}{d})]=\frac{1}{d}\tr(\Upsilon_\cE)\,.
\end{equation}
Thus, we have the following proposition for the resources required in the
derived PMPV protocol.
\begin{proposition}\label{thm:PMPVA}
  For any quantum process with postselection $\tilde\cK(\rho)=K\rho
  K^\dagger/\tr(K\rho K^\dagger)$, the one-way ${(A\to S)}$ adaptive QSV protocol
  $\Omega$ can always be converted to a PMPV protocol $\Xi$ with the worst-case
  failure probability satisfying
  \begin{equation}
    \max_{F_e(\cE,\cK)\le 1-\epsilon}
    \frac{\tr(\Xi\Upsilon_\cE)}{\tr[\cE(\bar{\rho})]}
    =\max_{F(\rho_\cE,\rho_\cK)\le 1-\epsilon}\tr(\Omega\rho_\cE)
    =1-\epsilon\nu(\Omega)\,.
    \label{eq:errorA}
  \end{equation}
  Hence, to verify the quantum process with postselection
  $\tilde\cK(\rho)=K\rho K^\dagger/\tr(K\rho K^\dagger)$, the efficiency of the
  deduced PMPV protocol $\Xi$ is equal to the efficiency of the AAPV protocol
  $\Omega$.
\end{proposition}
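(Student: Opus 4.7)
The plan is to mirror the conversion scheme of Proposition~\ref{thm:PMPV} in the postselection setting, and then to exploit the relaxation of the trace-preserving condition to upgrade the bound in Eq.~\eqref{eq:error} from an inequality to an equality.

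First, I would collect the ingredients already displayed just before the proposition. Given the one-way ($A\to S$) adaptive QSV protocol $\Omega=\sum_i M_i\otimes N_i$ for the postselected Choi state $\rho_\cK$, define the PMPV protocol $\Xi$ via Eq.~\eqref{eq:oneWayInputA}. A direct computation then yields $\Xi=\Omega/d$, $\bar\rho=\openone/d$, and $\tr(\Upsilon_\cE)=d\,\tr[\cE(\bar\rho)]$ as in Eq.~\eqref{eq:rhobar} (and the analogous identity with $\cK$ in place of $\cE$).

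Second, I would establish the first equality of Eq.~\eqref{eq:errorA}. Substituting the above identities into the PMPV failure-probability ratio gives
\begin{equation}
\frac{\tr(\Xi\Upsilon_\cE)}{\tr[\cE(\bar\rho)]}=\frac{\tr(\Omega\Upsilon_\cE)}{\tr(\Upsilon_\cE)}=\tr(\Omega\rho_\cE),
\end{equation}
using the definition $\rho_\cE=\Upsilon_\cE/\tr(\Upsilon_\cE)$. Because $F_e(\cE,\cK):=F(\rho_\cE,\rho_\cK)$ by Eq.~\eqref{eq:fidelityA}, the optimization constraint $F_e(\cE,\cK)\le 1-\epsilon$ on the PMPV side is literally the same as $F(\rho_\cE,\rho_\cK)\le 1-\epsilon$ on the QSV side, so the two maxima coincide.

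The main obstacle, and the genuine new content beyond Proposition~\ref{thm:PMPV}, is showing that the second equality ``$=1-\epsilon\nu(\Omega)$'' is actually attained, rather than just bounded as in Eq.~\eqref{eq:errorAA}. In the trace-preserving case, $\rho_\cE$ was forced to satisfy the marginal condition $\tr_S(\rho_\cE)=\openone/d$, which can prevent the QSV worst-case bound from being saturated. Once trace preservation is dropped, the Choi--Jamio{\l}kowski isomorphism of Eq.~\eqref{eq:ChoiRepA} permits $\rho_\cE$ to be any density operator on the joint space (positive semidefiniteness is the only surviving constraint, and it is automatic). The standard QSV extremizer, namely a convex combination of $\rho_\cK$ with a rank-one projector onto an eigenspace of $\Omega$ associated with $\lambda_2(\Omega)$, is therefore admissible and saturates the bound; mapping it back through Eq.~\eqref{eq:ChoiRepA} produces a completely positive (generally trace-decreasing) process $\cE$ realizing the equality, which finishes the proof.
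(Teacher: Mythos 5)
Your proposal is correct and follows essentially the same route as the paper: the identities $\Xi=\Omega/d$, $\bar\rho=\openone/d$, and $\tr(\Upsilon_\cE)=d\,\tr[\cE(\bar\rho)]$ reduce the PMPV failure ratio to $\tr(\Omega\rho_\cE)$, and the equality with $1-\epsilon\nu(\Omega)$ follows because dropping trace preservation lets $\rho_\cE$ range over all density operators. Your explicit saturating state $(1-\epsilon)\rho_\cK+\epsilon\ket{v}\bra{v}$ with $\ket{v}$ a $\lambda_2(\Omega)$-eigenvector is a welcome elaboration of a step the paper delegates to the QSV literature.
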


\section{Verification of quantum measurements}%
Finally, we briefly show that our protocols for process verification can
also be easily adapted to verify quantum measurements. The verification of
quantum measurements is similar to the verification of quantum processes with
postselection. However, they are also different in the sense that we can no longer
assume the availability of reliable quantum measurements for
measurement verification.

Suppose that we want to verify the projective measurement
$\cP=\{\ket{i}\bra{i}\}_{i=1}^d$. For an arbitrary measurement
$\cM=\{M_i\}_{i=1}^d$, we characterize the fidelity between $\cP$ and $\cM$
with
\begin{equation}
  F(\cM,\cP)=\frac{1}{d}\sum_{i=1}^d\bra{i}M_i\ket{i}\,.
  \label{eq:fidelityMeas}
\end{equation}
If we want to distinguish the two cases, $F(\cM, \cP)=1$ and $F(\cM, \cP)\le
1-\epsilon$, we just need to prepare the input state $\ket{i}$ with probability $1/d$
and measure it with the measurement $\cM$. If the measurement outcome is $i$,
then we say that $\cM$ passes the test; otherwise we say that $\cM$ fails the
test. Thus, the failure probability of the protocol in each run is given by
\begin{equation}
  \max_{F(\cM,\cP)\le 1-\epsilon}
  \frac{1}{d}\sum_{i=1}^d\bra{i}M_i\ket{i}=1-\epsilon\,,
  \label{eq:failMeas}
\end{equation}
which follows directly from the definition in Eq.~\eqref{eq:fidelityMeas}.
This implies that we can verify the projective measurement $\cP$ with
$N\approx\epsilon^{-1}\ln
\delta^{-1}$ copies of input states.

For entangled measurements in multipartite systems, we have two choices.
First, if we can generate entangled input states, we can simply use the
previous protocol for verifying the measurements. Second, in the case where we can trust
the reliability of single-partite measurements, then the verification of entangled
measurements (or, more precisely, quantum instruments) can
be treated as verifying quantum processes with post-selection
(see Sec.~\ref{sec:nonQPV}).

\section{Summary}%
The efficient characterization and identification of quantum processes play
a crucial role in almost all tasks of quantum information processing, such as
quantum computation, quantum communication, quantum metrology, and more.
In this work, by proposing the concept of QPV, we have established two
efficient and practical protocols for verifying quantum processes based on a spirit similar to that for QSV. Compared to the known methods, our protocols
can provide an exponential improvement over QPT and a quadratic improvement over DFE.
Specifically, we demonstrate the efficacy of our verification protocols
with many applications, including the verification of various quantum gates,
quantum circuits, and processes of quantum algorithms. As an
extension, we prove that quantum measurements can also be efficiently verified
by using an idea similar to that for quantum processes.  Moreover, the protocols
proposed in this work are well within the reach of current experimental
techniques, as only local measurements are needed.  As an outlook, it is
meaningful to consider how our protocols should be modified in the presence of
state preparation and measurement errors.

\acknowledgments
We are grateful to Otfried G\"uhne, Zhen-Peng Xu, and Hui Khoon Ng for helpful discussions.
J.S. especially thanks Liuyue Shang for being quiet sometimes.
This work was supported by the National Key R\&D Program of China
under Grant No.~2017YFA0303800 and the National Natural Science Foundation of
China through Grant Nos.~11574031, 61421001, and 11805010.  J.S. also
acknowledges support by the Beijing Institute of Technology Research Fund
Program for Young Scholars. X.D.Y. acknowledges support by the DFG and the ERC
(Consolidator Grant No.~683107/TempoQ).

\textit{Note added.---}%
Recently, we became aware of related works
by Zhu et al. \cite{Zhu.Zhang2019} and Zeng et al. \cite{Zeng.etal2019} .

\appendix

\section{More applications}\label{app:MoreApp}
In this Appendix, we present the verification of more quantum gates using
the AAPV and PMPV protocols.

\subsection{Identity}
The identity $\cI(\rho)=\rho$ represents a trivial process.
The corresponding Choi matrix is the maximally entangled state itself, i.e.,
\begin{equation}\label{eq:ChoiState_I}
\Upsilon_{\cI}=\ket{\psi}\bra{\psi}=\left[ \begin{smallmatrix}
1& 0 & 0& 1 \\0& 0 & 0& 0 \\0& 0 & 0& 0 \\ 1& 0 & 0& 1  \end{smallmatrix} \right].
\end{equation}
Then, the AAPV protocol to verify $\cI$ is given by
\begin{eqnarray}\label{eq:QSV_I}
  \Omega_{\Upsilon_{\cI}}=\frac{1}{2}\bigl(P_{XX}^{+}+P_{ZZ}^{+}\bigl)\,,
\end{eqnarray}
and the spectral gap is $\nu(\Omega_{\Upsilon_{\cI}})=1/2$.
Accordingly, the PMPV protocol is given by
\begin{align}\label{eq:QPV_I}
  \Xi_{\Upsilon_{\cI}}=&\frac{1}{4}\bigl(\,\ket{+}\bra{+}\otimes\ket{+}\bra{+}
  +\ket{-}\bra{-}\otimes\ket{-}\bra{-}\nonumber\\
  &+\ket{0}\bra{0}\otimes\ket{0}\bra{0}
  +\ket{1}\bra{1}\otimes\ket{1}\bra{1}\,\bigl)\,,
\end{align}
where $\ket{\pm}=\frac{1}{\sqrt{2}}(\ket{0}\pm\ket{1})$.
The verification operator $\Xi_{\Upsilon_{\cI}}$ tells us all the prepare-and-measure operators with
their corresponding probability distributions.

Note that the protocol $\Omega_{\Upsilon_{\cI}}$ in Eq.~\eqref{eq:QSV_I} is not actually optimal.
The efficiency can be further improved by using more measurement settings such that
\begin{eqnarray}\label{eq:QSV_Iopt}
  \Omega'_{\Upsilon_{\cI}}=\frac{1}{2}\bigl(P_{XX}^{+}+P_{YY}^{-}+P_{ZZ}^{+}\bigl)\,,
\end{eqnarray}
and then the spectral gap is $\nu(\Omega'_{\Upsilon_{\cI}})=2/3$.
Similar results can be obtained for the PMPV protocol, where the efficiency
can be improved by involving more different kinds of input states.

Furthermore, the above protocols can be directly adapted for verifying any
single-qubit gate $U$ by the following substitutions:
\begin{equation}
  \begin{aligned}
  X\otimes X&\to X\otimes UXU^\dagger,\\
  Y\otimes Y&\to Y\otimes UYU^\dagger,\\
  Z\otimes Z&\to Z\otimes UZU^\dagger.
  \end{aligned}
\end{equation}
See below for some concrete examples.

\subsection{The bit-flip gate}%
The bit-flip gate is $X=\left[ \begin{smallmatrix} 0& 1 \\ 1 & 0 \end{smallmatrix} \right]$,
which is actually the Pauli-$X$ operator.
The corresponding Choi matrix is
\begin{equation}\label{eq:ChoiState_X}
  \Upsilon_{X}
  =\bigl(\cI \otimes \cX\bigl)\bigl(\ket{\psi}\bra{\psi}\bigl)
  =\left[ \begin{smallmatrix} 0& 0 & 0& 0 \\0& 1 & 1& 0 \\0& 1 & 1& 0 \\ 0& 0 & 0& 0 \end{smallmatrix} \right].
\end{equation}
Then, the AAPV protocol to verify $X$ is given by
\begin{eqnarray}\label{eq:QSV_X}
  \Omega_{\Upsilon_{X}}&=&\frac{1}{2}\bigl(P_{XX}^{+}+P_{ZZ}^{-}\bigl)\,,
\end{eqnarray}
with the spectral gap being $\nu(\Omega_{\Upsilon_{X}})=1/2$.
Accordingly, the PMPV protocol is given by
\begin{align}\label{eq:QPV_X}
  \Xi_{\Upsilon_{X}}=&\frac{1}{4}\bigl(\,\ket{+}\bra{+}\otimes\ket{+}\bra{+}
  +\ket{-}\bra{-}\otimes\ket{-}\bra{-}\nonumber\\
  &+\ket{1}\bra{1}\otimes\ket{0}\bra{0}
  +\ket{0}\bra{0}\otimes\ket{1}\bra{1}\,\bigl)\,.
\end{align}

\subsection{The Hadamard gate}%
The Hadamard gate is $H=\frac1{\sqrt{2}}\left[ \begin{smallmatrix} 1& 1 \\ 1 & -1 \end{smallmatrix} \right]$,
with the corresponding Choi matrix
\begin{equation}\label{eq:ChoiState_H}
  \Upsilon_{H}
  =\bigl(\cI \otimes \cH \bigl)\bigl(\ket{\psi}\bra{\psi}\bigl)
  =\frac1{2}\left[ \begin{smallmatrix} 1& 1 & 1& -1 \\1& 1 & 1& -1 \\1& 1 & 1& -1 \\ -1& -1 & -1& 1  \end{smallmatrix} \right].
\end{equation}
Then, the AAPV protocol is given by
\begin{eqnarray}\label{eq:AAPV_H}
  \Omega_{\Upsilon_{H}}&=&\frac{1}{2}\bigl(P_{XZ}^{+}+P_{ZX}^{+}\bigl)\,,
\end{eqnarray}
with the spectral gap being $\nu(\Omega_{\Upsilon_{H}})=1/2$.
Accordingly, the PMPV protocol is given by
\begin{align}\label{eq:PMPV_H}
  \Xi=&\frac{1}{4}\bigl(\,\ket{+}\bra{+}\otimes\ket{0}\bra{0}
  +\ket{-}\bra{-}\otimes\ket{1}\bra{1}\nonumber\\
  &+\ket{0}\bra{0}\otimes\ket{+}\bra{+}
  +\ket{1}\bra{1}\otimes\ket{-}\bra{-}\,\bigl)\,.
\end{align}

\subsection{The phase gate}%
The one-qubit phase gate is written as $S=\left[ \begin{smallmatrix} 1& 0 \\ 0 & \I \end{smallmatrix} \right]$,
with the corresponding Choi matrix
\begin{equation}\label{eq:ChoiState_S}
  \Upsilon_{S}
  =\bigl(\cI \otimes \cS \bigl)\bigl(\ket{\psi}\bra{\psi}\bigl)
  =\left[ \begin{smallmatrix} 1& 0 & 0& -\I \\0& 0 & 0& 0 \\0& 0 & 0& 0 \\ \I& 0 & 0& 1  \end{smallmatrix} \right].
\end{equation}
Then, the AAPV protocol is given by
\begin{eqnarray}\label{eq:AAPV_S}
  \Omega_{\Upsilon_{S}}&=&\frac{1}{2}\bigl(P_{ZZ}^{+}+P_{XY}^{+}\bigl)\,,
\end{eqnarray}
with the spectral gap being $\nu(\Omega_{\Upsilon_{S}})=1/2$.
Accordingly, the PMPV protocol is given by
\begin{align}\label{eq:PMPV_S}
  \Xi_{\Upsilon_{S}}=&\frac{1}{4}\bigl(\,\ket{0}\bra{0}\otimes\ket{0}\bra{0}
  +\ket{1}\bra{1}\otimes\ket{1}\bra{1}\nonumber\\
  &+\ket{+}\bra{+}\otimes\ket{\top}\bra{\top}
  +\ket{-}\bra{-}\otimes\ket{\bot}\bra{\bot}\,\bigl),
\end{align}
where $\ket{\top}=\frac1{\sqrt{2}}(\ket{0}+\I\ket{1})$ and $\ket{\bot}=\frac1{\sqrt{2}}(\I\ket{0}+\ket{1}$).

\section{Verification of the Deutsch-Jozsa algorithm}\label{app:VofDJ}
The Deutsch-Jozsa algorithm \cite{DeutschJozsa1992} is used to determine
whether a given function $f(x)$ is \textit{constant} or \textit{balanced}.
The process of this algorithm is different depending on the function type of $f(x)$.
Below we show the verification of a few typical cases using our protocols.

First, we consider the case where $f(x)$ is a constant function.
If $f(x)=0$, the process is the trivial identity $\cI^{\otimes(n+1)}$.
If instead $f(x)=1$, the process of the algorithm is $\cI^{\otimes
n}\otimes\cZ$, which can also be efficiently verified.
For demonstration, we consider the three-qubit case,
i.e., $\cI^{\otimes 2}\otimes\cZ$. The corresponding Choi matrix is given by
\begin{equation}
  \Upsilon_{f(x)=1}=\ket{\phi_1}\bra{\phi_1}\,,
\end{equation}
where
\begin{align}
  \ket{\phi_1}=&\ket{000000}-\ket{001001}+\ket{010010}-\ket{011011}\nonumber\\
	      +&\ket{100100}-\ket{101101}+\ket{110110}-\ket{111111}\,.
\end{align}
Then, we can construct the AAPV protocol as
\begin{align}
  \Omega_{\Upsilon_{f(x)=1}} 
  = &\frac1{6}\bigl(P_{ZZZZZZ}^{+}+P_{ZZYZZY}^{+}+P_{ZZXZZX}^{-}\nonumber\\
		    +&P_{ZYZZYZ}^{-}+P_{ZXZZXZ}^{+}+P_{YZZYZZ}^{-}\bigl)\,,
\end{align}
which uses six stabilizer generators.
The spectral gap is $\nu(\Omega_{\Upsilon_{f(x)=1}})=1/6$,
so the number of input states required is
$N\approx6\epsilon^{-1}\ln\delta^{-1}$ with fidelity $1-\epsilon$ and confidence level $1-\delta$.

Next, we consider the balanced case, which may have many different varieties of
$f(x)$. Take the three-qubit case $\cI\otimes\cC_{\text{\sc NOT}}$ as an example,
in which the function is completely determined by the second qubit, namely,
$f(x)=x_2$. The corresponding Choi matrix is
\begin{equation}
  \Upsilon_{f(x)=x_2}=\ket{\phi_b}\bra{\phi_b}\,,
\end{equation}
where
\begin{align}
  \ket{\phi_b}  
  = &\ket{000000}+\ket{001001}+\ket{011010}+\ket{010011}\nonumber\\
    +&\ket{100100}+\ket{101101}+\ket{111110}+\ket{110111}\,.
\end{align}
Then, the AAPV protocol is constructed as
\begin{align}
  &\Omega_{\Upsilon_{f(x)=x_2}} 
  =\frac1{6}\bigl(P_{ZZZZZZ}^{+}+P_{ZZYZZY}^{-}+P_{ZZXZZX}^{+}\nonumber\\
 &\qquad+P_{ZYZZYZ}^{-}+P_{ZXZZXZ}^{+}+P_{YZZYZZ}^{-}\bigl)\,,
\end{align}
which also utilizes six stabilizer generators, and the spectral gap is $\nu(\Omega_{\Upsilon_{f(x)=x_2}})=1/6$.
Accordingly, the PMPV protocols for all the cases studied above can be easily
constructed using Eq.~\eqref{eq:Xi}.


%

\end{document}